\documentclass[final,3p,times]{elsarticle}
\usepackage{amsfonts}
\usepackage{amssymb}
\usepackage{amsmath}
\usepackage{amsthm}
\usepackage{graphics}
\usepackage{graphicx}
\usepackage{xcolor}
\usepackage[colorlinks = true, linkcolor = blue, 
urlcolor = blue, citecolor = blue]{hyperref}
\newtheorem{theorem}{Theorem}

\newtheorem{definition}{Definition}

\newtheorem{proposition}{Proposition}
\newtheorem{observation}{Observation}

\makeatletter
\def\ps@pprintTitle{   
\let\@oddhead\@empty   
\let\@evenhead\@empty   
\let\@oddfoot\@empty   
\let\@evenfoot\@oddfoot
}
\makeatother

\begin{document}
\begin{frontmatter}
\title{Construction of noisy bound entangled states and the range criterion}

\author{Saronath Halder}
\ead{saronath.halder@gmail.com}

\author{Ritabrata Sengupta}
\ead{rb@iiserbpr.ac.in}

\address{Department of Mathematical Sciences, 
Indian Institute of Science Education and Research 
Berhampur,\\ Transit Campus,  Government ITI, Berhampur 
760010, Odisha, India}

\begin{abstract}
In this work we consider bipartite noisy bound entangled states with 
positive partial transpose, that is, such a state can be written as a convex 
combination of an edge state and a separable state. In particular, we 
present schemes to construct distinct classes of noisy bound entangled 
states which satisfy the range criterion. As a consequence of the present 
study we also identify noisy bound entangled states which do not satisfy 
the range criterion. All of the present states are constituted by exploring 
different types of product bases. 
\end{abstract}

\begin{keyword}
Bound entanglement, Positive partial transpose, Edge state, Range criterion, 
Unextendible product basis,  Uncompletable product basis
\end{keyword}
\end{frontmatter}

\section{Introduction}\label{sec1}
One of the key developments within the theory of quantum entanglement 
\cite{Horodecki09, Gune09}, is the invention of bound entangled states 
\cite{Horodecki97}. These states are mixed entangled states from which 
entanglement in pure form cannot be extracted by local operations and 
classical communication \cite{Horodecki98}. This holds true even if large 
number of identical copies of the state are shared among spatially separated 
parties. Since the discovery of bound entangled states, there is no simple 
technique to identify such states. Therefore, it is highly nontrivial to present 
new classes of bound entangled states. For a given bipartite quantum state 
if the state produces negative eigenvalue(s) under partial transpose then 
it guarantees inseparability of that state \cite{Peres96}. The problem 
arises when the given state remains positive under partial transpose (PPT). 
In such a situation it is not always easy to conclude whether the state is 
separable or inseparable (entangled). Generally, for an arbitrary bipartite 
PPT state if the dimension of the corresponding Hilbert space is greater 
than 6 then it is difficult to say whether the state is separable or inseparable 
\cite{Horodecki96}. In fact, the problem of determining any density matrix 
-- separable or entangled is a NP-hard problem \cite{Gurvits04}. However, 
if a PPT state is entangled then the state must be bound entangled 
\cite{Horodecki98}. On the other hand existence of bound entangled states 
with negative partial transpose is conjectured and remains open till date 
\cite{DiVincenzo00, Dur00, Pankowski10}.  

Application of the {\it range criterion} is quite effective approach to prove 
the inseparability a given PPT state \cite{Horodecki97}. For a given bipartite 
density matrix $\rho$, if the state is separable then there exists a set of 
product states \{$|\theta_i\rangle_1\otimes|\theta_i\rangle_2$\} that spans 
the range of $\rho$ while the set of product states \{$|\theta_i\rangle_1
\otimes|\theta_i^\ast\rangle_2$\} spans the range of $\rho^t$. Here, the 
superscript $t$ denotes the partial transpose operation (considering second 
subsystem) and $\ast$ denotes the complex conjugation in a basis with 
respect to which the partial transpose is taken. Any state which violates the 
range criterion is an entangled state. However, there exist several classes 
of PPT entangled states which satisfy the range criterion 
\cite{Bandyopadhyay05, Bandyopadhyay08}. Evidently, detection of such 
states are one of the troublesome tasks in the entanglement theory. 
Therefore, to understand these states in a better way, it is important to 
constitute such states. Note that a full-rank state trivially satisfy the range 
criterion. So, it is significant to understand the forms of distinct classes of 
low-rank bound entangled states which satisfy the range criterion.

An efficient scheme to produce bound entangled states is related to 
unextendible product bases (UPBs). In Ref.~\cite{Bennett99}, which 
introduces UPB, it was shown that for a given Hilbert space $\mathcal{H}$ 
if the states within a UPB span the subspace $\mathcal{H}_S$ of 
$\mathcal{H}$ then the normalized projector onto the complementary 
subspace $\mathcal{H}_S^\perp$ is a PPT entangled state. The bipartite 
bound entangled states produced in this manner are edge states and they 
violate the range criterion in an extreme way. This is because an edge state 
$\rho$ has a property that there exists no product state $|\theta_1\rangle
\otimes|\theta_2\rangle$ in its range such that $|\theta_1\rangle\otimes|
\theta_2^\ast\rangle$ belongs to the range of $\rho^t$ \cite{Lewenstein01}. 
So, it is interesting to explore the states which are not edge states still 
violet the range criterion. In this context, it is important to mention about 
uncompletable product bases (UCPBs). For a given Hilbert space, a UCPB 
cannot be extended to a complete orthogonal product basis 
\cite{DiVincenzo03}. Furthermore, in the same paper it was shown that for 
a given Hilbert space $\mathcal{H}$ if the states within a UCPB span the 
subspace $\mathcal{H}_S$ of $\mathcal{H}$ then the normalized projector 
onto the complementary subspace $\mathcal{H}_S^\perp$ may or may not 
be a PPT entangled state.

For a practical purpose, it is difficult to say how to use an arbitrary bound 
entangled state. Nevertheless, in last few years use of several classes of bound 
entangled states were discussed in different contexts, for example, secure key 
distillation \cite{Horodecki05, Horodecki08, Horodecki09-1}, quantum metrology 
\cite{Toth18} etc. Bipartite bound entangled states which are related to quantum 
steering \cite{Moroder14} and quantum nonlocality \cite{Vertesi14} were also 
explored. Later, a family of nonlocal bound entangled states were constructed 
in Ref.~\cite{Yu17}. In the present work we consider bipartite {\it noisy} bound 
entangled states with positive partial transpose (see Ref.~\cite{Sindici18} as well). 
Any of the present states can be produced by mixing a separable state (noise) 
with an edge state \cite{Lewenstein01}. Therefore, these bound entangled states 
are useful to learn about the robustness of entanglement within an edge state 
(also go through Ref.~\cite{Bandyopadhyay08} in this context). In 
Ref.~\cite{Toth18} the authors showed some examples of bipartite bound 
entangled states, entanglement of which is robust against noise. Again, it was 
shown that the bound entangled states within which entanglement is robust 
against noise are fit for experimental verification \cite{Sentis18}. Clearly, the 
study of noisy bound entangled states has got practical importance.

We now give the main findings of the present work: (i) Starting from a 
particular class of UPBs, we show how to construct bipartite noisy bound 
entangled states that satisfy the range criterion. The range of such a state 
is spanned by a set of orthogonal product states. In particular, we obtain 
that these bound entangled states have lower rank with respect to the 
states presented in \cite{Bandyopadhyay05} for a given Hilbert space. (ii) 
Next, we give a protocol to constitute bipartite noisy bound entangled 
states from a particular class of UCPBs. An important property of these 
bound entangled states is that they satisfy the range criterion though the 
range of such a state cannot be spanned by a set of orthogonal product 
states. (iii) We further explore the construction of other classes of bipartite 
noisy bound entangled states. A subset of which do not satisfy the range 
criterion.

Rest of the paper is arranged in the following way: In Sec.~\ref{sec2}, we 
give few definitions and preliminary ideas that are helpful to describe the 
present constructions. Next, in Sec.~\ref{sec3}, the main results of this paper 
are presented. Finally, in Sec.~\ref{sec4}, the conclusion is drawn.

\section{Preliminaries}\label{sec2}
In this section we first give the definitions of different types of product 
bases namely UPB and UCPB for bipartite quantum systems. For more 
general definitions, one can go through the Refs.~\cite{Bennett99, 
DiVincenzo03}. Furthermore, we discuss about the bound entangled states 
produced from UPBs and UCPBs. We also discuss about some existing tools 
to examine the inseparability of a given bipartite PPT state.

\begin{definition}\label{def1}
Let $\mathcal{H}$ = $\mathcal{H}_A\otimes\mathcal{H}_B$ be a bipartite 
quantum system. Consider a set $S$ of pure orthogonal product states which 
span a subspace $\mathcal{H}_S$ of $\mathcal{H}$. Now, the states of $S$ 
form a \emph{UCPB} if the complementary subspace $\mathcal{H}_S^\perp$ 
contains fewer pure orthogonal product states than its dimension. On the other 
hand, the states of $S$ form a \emph{UPB} if the complementary subspace 
$\mathcal{H}_S^\perp$ contains no product state.
\end{definition}

Note that the vectors which span the subspace $\mathcal{H}_S^\perp$ are 
all orthogonal to the vectors in $\mathcal{H}_S$. Clearly, a UCPB cannot be 
extended to a full basis for a given Hilbert space $\mathcal{H}$. This is 
because if the states of the set $S$ form a UCPB then these states along 
with few other mutually orthogonal product states in $\mathcal{H}_S^\perp$ 
are not sufficient to cover the whole dimension of $\mathcal{H}$. Moreover, 
if the states of the set $S$ form a UPB then it is not possible to find any 
product state which is orthogonal to all the states of $S$. We now consider 
a set of bipartite product states $\{|\phi_i\rangle = |\alpha_i\rangle\otimes
|\beta_i\rangle\}_{i=1}^n$  where $|\phi_i\rangle\in\mathcal{H}$ = 
$\mathcal{H}_A\otimes\mathcal{H}_B$ for each $i$. Also consider that this 
set forms an unextendible product basis which spans the subspace 
$\mathcal{H}_S$ of $\mathcal{H}$. So, the normalized projector onto 
$\mathcal{H}_S^\perp$ can be written as

\begin{equation}\label{eq1}
\rho = \frac{1}{D-n}\left(I-\sum_{i=1}^n|\phi_i\rangle\langle\phi_i|\right),
\end{equation}
where $D$ is the total dimension of the composite quantum system 
$\mathcal{H}$ and $I$ is the identity operator acting on $\mathcal{H}$. The 
density matrix $\rho$ is a bipartite bound entangled state \cite{Bennett99}. 
Note that if the states $|\phi_i\rangle$ are the normalized vectors of real vector 
space then the state $\rho$ must be invariant under partial transpose. Next, 
consider a set $S^\prime$ of pure orthogonal product states $\{|\phi_i\rangle 
= |\alpha_i\rangle\otimes|\beta_i\rangle\}_{i=1}^{n^\prime}$. Also assume 
that these states form a UCPB in a Hilbert space $\mathcal{H}$ of dimension 
$D$. So, if the states $\{|\phi_i\rangle\}_{i=1}^{n^\prime}$ span the 
subspace $\mathcal{H}_{S^\prime}$ of $\mathcal{H}$ then $\mathcal{H}_{S
^\prime}^\perp$ contains the product states $\{|\phi_i\rangle = |\alpha_i
\rangle\otimes|\beta_i\rangle\}_{i = n^\prime+1}^{n}$, where $n<D$. The 
normalized projector onto the subspace $\mathcal{H}_{S^\prime}^\perp$ is 
given by

\begin{equation}\label{eq2}
\rho^\prime = \frac{1}{D-n^\prime}\left(I-\sum_{i=1}^{n^\prime}
|\phi_i\rangle\langle\phi_i|\right) 
= \frac{D-n}{D-n^\prime}\left(\frac{1}{D-n}\left(I-\sum_{i=1}^{n}
|\phi_i\rangle\langle\phi_i|\right)\right)
+ \frac{n-n^\prime}{D-n^\prime}\left(\frac{1}{n-n^\prime}\sum_{i=n
^\prime +1}^{n}|\phi_i\rangle\langle\phi_i|\right),
\end{equation}
where $I$ is the identity operator acting on the same Hilbert space where the 
states of $S^\prime$ belong. From the above it is obvious that if the state 
$\rho^\prime$ is entangled then it must be a noisy bound entangled state. In 
this sense, UCPBs have an important role to produce noisy bound entangled 
states. Moreover, such a state is partial transpose invariant if the states of the 
given UCPB are normalized vectors of a real vector space. In this work we mostly 
discuss about the bound entangled states which are partial transpose invariant. 
Remember that to prove a partial transpose invariant state satisfies the range 
criterion, it is sufficient to find a set of pure product states (that are normalized 
vectors of a real vector space) which spans the range of the given state. 

To examine whether a given PPT state is inseparable or not, indecomposable 
positive (P) maps which are not completely positive (CP) play an important role. 
For example, the celebrated Choi map \cite{Choi75} is one such map. There is 
a rich literature on constructions, examples, and applications of such maps (for 
instance see Ref.~\cite{Chruscinski14} and the references therein). One can 
also use suitable witness operators \cite{Chruscinski14, Lewenstein01} to do the 
above. Here, we consider only PPT states and thus, to prove the separability or 
inseparability of those states, indecomposable P maps which are not CP or the 
witness operators are quite relevant. However, we start by giving the definition 
of witness operators.

\begin{definition}\label{def2}
An entanglement witness $\mathcal{W}$ is a Hermitian operator with the 
properties that (a) Tr($\mathcal{W}\delta$) $\geq$ 0, for any separable 
density matrix $\delta$ and (b) there is at least an inseparable density 
matrix $\rho$ for which Tr($\mathcal{W}\rho$) $<$ 0. 
\end{definition}

Note that these witness operators can be taken in normalized form, i.e., 
Tr($\mathcal{W}$) = 1. The following witness operator, we are going to 
use extensively in our paper. This witness operator is efficient enough to 
detect any UPB generated bound entangled state. Suppose, $\rho$ is a 
UPB generated bound entangled state as given in Eq.~(\ref{eq1}). To 
detect this state we further consider the entanglement witness operator 
\cite{Lewenstein01, Bandyopadhyay05, Terhal01}, given by

\begin{equation}\label{eq3}
\mathcal{W} = \Pi -\gamma I,
\end{equation}
where $\Pi$ is a projector onto the subspace spanned by the states 
within the UPB and $I$ is the identity operator acting on the same 
Hilbert space where the states of the UPB belong. The parameter 
$\gamma$ can be defined in the following way:

\begin{equation}\label{eq4}
\gamma = \mbox{min}\langle\phi|\Pi|\phi\rangle,
\end{equation}
where the minimization is taken over all separable states $|\phi\rangle$, 
belong to the Hilbert space where the states of the UPB belong. By 
construction of $\mathcal{W}$, we obtain 

\begin{equation}\label{eq5}
\mbox{Tr}\big(\mathcal{W}\rho\big) = -\gamma < 0.
\end{equation}

Therefore, $\mathcal{W}$ witnesses the state $\rho$. Notice that the 
structure of the operator ensures the fact that the trace of the above 
equation must be $\geq0$ if the state $\rho$ is a separable state. In this 
context, it is important to mention that entanglement witness operators 
for bipartite states can be constructed from a positive but not completely 
positive map. Let us consider a P but not CP map $\Lambda$, where 
$\Lambda: M_d\rightarrow M_d$. The operator $(I\otimes\Lambda)
|\Psi\rangle\langle\Psi|$ can be used to witness entanglement of some 
states, here $I$ is a $d\times d$ identity matrix and $|\Psi\rangle$ is a 
maximally entangled state in $d \otimes d$. 

For a given entangled state, there always exists a positive but not 
completely positive map such that the map detects the entanglement 
of the state \cite{Horodecki96}. Taking inner automorphism of this map, 
it is possible to detect other entangled states that are locally equivalent 
to the given state. This particular fact can be realized in the following 
way: Let $\rho$ be an entangled state which is detected by a positive 
but not completely positive map $\Lambda$, i.e., $\exists$ $|\psi\rangle$ 
such that $\langle\psi|(I\otimes\Lambda)\rho|\psi\rangle < 0$. Now, 
consider any operator $\rho^\prime$ = $(A\otimes B)\rho(A\otimes B)
^\dagger$; $A$ and $B$ are invertible operators. Rewriting the state 
$\rho$ as $\sum_{i,j}|i\rangle\langle j|\otimes\rho_{ij}$ in the block 
matrix form, we get 

\begin{equation}\label{eq6}
\rho^\prime = (A\otimes B)\left(\sum_{i,j}|i\rangle\langle j|\otimes
\rho_{ij}\right)(A\otimes B)^\dagger = \sum_{i,j}(A|i\rangle\langle 
j|A^\dagger)\otimes (B\rho_{ij}B^\dagger).
\end{equation}
Applying the map $\Lambda$ on one of the subsystems of the operator 
$\rho^\prime$, we obtain

\begin{equation}\label{eq7}
(I\otimes\Lambda)\rho^\prime = \sum_{i,j}(A|i\rangle\langle j|
A^\dagger)\otimes\Lambda(B\rho_{ij}B^\dagger) = (A\otimes I)
\left(\sum_{i,j}|i\rangle\langle j|\otimes\Lambda (B\rho_{ij}B^\dagger)
\right)(A^\dagger\otimes I). 
\end{equation}
Notice that whether the above is {\it positive} or {\it negative}, solely 
depends on the term $\sum_{i,j}|i\rangle\langle j|\otimes\Lambda 
(B\rho_{ij}B^\dagger)$. We further consider a different map $\Lambda
^\prime$ and apply it instead of $\Lambda$, where the action of 
$\Lambda^\prime$ can be defined as $\Lambda^\prime(X)$ = 
$\Lambda(B^{-1}XB^{-1})$. Finally, consider the following 

\begin{equation}\label{eq8}
\langle\psi|(I\otimes\Lambda^\prime)(I\otimes B)\rho(I\otimes B)^\dagger
|\psi\rangle = \langle\psi|\sum_{i,j}|i\rangle\langle j|\otimes\Lambda^\prime
(B\rho_{ij}B^\dagger)|\psi\rangle = \langle\psi|\sum_{i,j}|i\rangle\langle j|
\otimes\Lambda(\rho_{ij})|\psi\rangle = \langle\psi|(I\otimes\Lambda)
\rho|\psi\rangle<0.
\end{equation}
Thus, application of local invertible operators basically help to detect some 
extra entangled states with the same P but not CP map (in this context see 
also Refs.~\cite{Guhne06}). However, for better understanding of this 
technique we consider few known PPT entangled states which satisfy the range 
criterion and discuss their detection using the Choi map along with a local 
unitary operator (given in the next section). This also helps us to understand 
the main results presented in the next section as applying the same technique 
we prove the inseparability of few PPT states in $d\otimes 3$. Note that to 
prove the inseparability of the other PPT states we use the witness operator 
given in Eq.~(\ref{eq3}).

\section{Main results}\label{sec3}
In Ref.~\cite{Sengupta13}, it is shown that along with a suitable unitary 
it is possible to use Choi map to detect the PPT entangled state which is 
generated from a $3\otimes3$ UPB, given in Ref.~\cite{Bennett99}. The 
UPB and corresponding PPT entangled state are given by 

\begin{equation}\label{eq9}
\begin{array}{c}
|\psi_1\rangle = \frac{1}{\sqrt{2}}|1\rangle|1-2\rangle,~~ 
|\psi_2\rangle = \frac{1}{\sqrt{2}}|1-2\rangle|3\rangle,~~ 
|\psi_3\rangle = \frac{1}{\sqrt{2}}|3\rangle|2-3\rangle,~~
|\psi_4\rangle = \frac{1}{\sqrt{2}}|2-3\rangle|1\rangle,~~\\
|\psi_5\rangle = \frac{1}{3}|1+2+3\rangle|1+2+3\rangle,~~

\rho = \frac{1}{4}\left(I-\sum_{i=1}^5
|\psi_i\rangle\langle\psi_i|\right).
\end{array}
\end{equation}
Here the notation $|a\pm b \pm c\rangle$ denotes the vector $|a\rangle\pm
|b\rangle\pm|c\rangle$. We use this notation throughout the paper. Note that 
the Choi map alone cannot detect the PPT entangled state $\rho$ of 
Eq.~(\ref{eq9}). The action of Choi map with a unitary operator is given in 
the following equation:

\begin{eqnarray}\label{eq10}
\Lambda: M_3 \rightarrow M_3,~~
(I\otimes\Lambda_u)\rho = (I\otimes\Lambda)(I\otimes u)\rho(I\otimes u)
^\dagger, \nonumber \\
\Lambda: ((a_{ij})) \rightarrow 
\frac{1}{2}
\left[\begin{array}{ccc}
a_{11} + a_{22} &        -a_{12}       &        -a_{13}       \\
       -a_{21}       &  a_{22} + a_{33} &        -a_{23}       \\
    -a_{31}       &        -a_{32}       &  a_{33} + a_{11} \\
\end{array}\right],
\end{eqnarray}
where $u$ is a unitary operator. Here, we apply with the following unitary 
operator: 

\begin{equation}\label{eq11}
u =\left[
\begin{array}{ccc}
 \frac{1}{2}  &      \frac{\sqrt{3}}{2}        &            0         \\ 
-\frac{\sqrt{3}}{2}  &      \frac{1}{2}        &            0         \\ 
    0        &          0             &            1         \\
\end{array}\right].
\end{equation}

\begin{observation}\label{obs1}
It is sufficient to apply the Choi map along with the unitary $u$ to detect 
certain bound entangled states in $3\otimes3$ which satisfy the range 
criterion. 
\end{observation}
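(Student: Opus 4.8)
The plan is to verify two things in turn for the relevant $3\otimes3$ states: that they satisfy the range criterion, and that a single fixed pair --- the Choi map $\Lambda$ of Eq.~(\ref{eq10}) together with the unitary $u$ of Eq.~(\ref{eq11}) --- certifies their inseparability. First I would pin down the intended states. Besides the tiles state $\rho$ of Eq.~(\ref{eq9}) itself, the natural candidates are the known $3\otimes3$ PPT entangled states of Refs.~\cite{Bandyopadhyay05, Bandyopadhyay08} obtained from the tiles UPB by returning (a fraction of) a product vector to the range, i.e.\ states of the form $\sigma = \frac{1}{4+\lambda}\big(I-\sum_{i=1}^{5}|\psi_i\rangle\langle\psi_i| + \lambda\,|\psi_5\rangle\langle\psi_5|\big)$ and the UCPB-type mixture $\frac{4}{5}\rho+\frac15|\psi_5\rangle\langle\psi_5|$. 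Each of these is real and partial-transpose invariant, so by the remark in Sec.~\ref{sec2} it suffices to display a set of real product vectors spanning its range (and, for safety, one spanning the range of its partial transpose); exhibiting such vectors explicitly also shows the rank is strictly larger than $4$, so these are genuinely noisy and not edge states.

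The core step is the matrix computation. For a target state $\sigma$ I would form $(I\otimes u)\sigma(I\otimes u)^\dagger$, write it in block form $\sum_{i,j}|i\rangle\langle j|\otimes\sigma_{ij}$, apply $\Lambda$ block-by-block according to Eq.~(\ref{eq10}) to obtain the $9\times9$ Hermitian matrix $(I\otimes\Lambda_u)\sigma$, and certify that its smallest eigenvalue is negative --- most economically by producing one test vector $|\psi\rangle$ with $\langle\psi|(I\otimes\Lambda_u)\sigma|\psi\rangle<0$ (the maximally entangled vector $\frac1{\sqrt3}\sum_k|k\rangle|k\rangle$, possibly slightly perturbed, is the natural first guess). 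Since the Choi map is positive and conjugation by $u$ is a positive (indeed unitary) channel, $\Lambda_u$ is again a positive map, and being built from an indecomposable, not completely positive map it can certify PPT entanglement; equivalently $\mathcal{W}=(I\otimes\Lambda_u)^\dagger(|\psi\rangle\langle\psi|)$ is an entanglement witness, in the spirit of Eqs.~(\ref{eq3})--(\ref{eq5}) but arising from a P-but-not-CP map rather than a UPB projector. The reason to involve $u$ at all is Ref.~\cite{Sengupta13}: the bare Choi map does not detect the tiles state, so the inner-automorphism trick of Eqs.~(\ref{eq6})--(\ref{eq8}) must be invoked, and $u$ is precisely the corresponding local adjustment.

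Finally, to justify the word ``certain'' --- a whole family rather than one state --- I would again appeal to Eqs.~(\ref{eq6})--(\ref{eq8}): if $\sigma$ is detected by $\Lambda_u$ and $A\otimes B$ is invertible, then $(A\otimes B)\sigma(A\otimes B)^\dagger$ is detected by the map obtained from $\Lambda_u$ by pre- and post-composing with conjugations by $B^{-1}$ and $A$, while local invertible operations preserve positivity of the partial transpose, entanglement, and (carrying product vectors to product vectors) the range criterion. Hence the whole local-invertible orbit of each target state, together with the one-parameter family in $\lambda$, consists of $3\otimes3$ bound entangled states satisfying the range criterion, all handled by (automorphism-adjusted versions of) the single Choi map. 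The main obstacle is the explicit $9\times9$ computation and the accompanying negativity certificate: the Choi map couples the three diagonal blocks of $\sigma$ and flips the signs of the off-diagonal blocks, so after the $u$-rotation the entries are unpleasant and one must keep careful track of the extra $\lambda\,|\psi_5\rangle\langle\psi_5|$ piece. A way to sidestep the brute force is to lean on the computation already carried out for the tiles state in Ref.~\cite{Sengupta13} and only argue that the contribution of the added separable part to $\langle\psi|(I\otimes\Lambda_u)\sigma|\psi\rangle$ is continuous in $\lambda$ and, evaluated at the chosen witness vector, keeps the total negative over the relevant parameter range; a continuity/compactness argument then yields an open family of detected states at once.
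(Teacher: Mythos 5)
Your proposal is correct and follows essentially the same route as the paper: the paper takes the noisy tiles-UPB states $\rho_1(\lambda)=\lambda|\psi_i\rangle\langle\psi_i|+(1-\lambda)\rho$ and $\rho_2(\lambda)=\lambda I/9+(1-\lambda)\rho$ (already known to satisfy the range criterion) and verifies numerically that $(I\otimes\Lambda)(I\otimes u)\rho_i(\lambda)(I\otimes u)^\dagger$ acquires a negative minimum eigenvalue over a range of $\lambda$, which is exactly the block-wise $\Lambda_u$ computation you describe. The only cosmetic differences are that the paper admixes $|\psi_1\rangle$ and white noise $I/9$ rather than the stopper, and certifies negativity by plotting minimum eigenvalues rather than exhibiting a single test vector or a continuity argument.
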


We consider two distinct classes of PPT states $\rho_1(\lambda)$, 
$\rho_2(\lambda)$. Originally, the values of $\lambda$ for which these 
states are entangled, can be found in Refs.~\cite{Bandyopadhyay05, 
Bandyopadhyay08}. $\rho_1(\lambda)$ and $\rho_2(\lambda)$ are 
given by

\begin{equation}\label{eq12}
\begin{array}{c}
\rho_1(\lambda) = \lambda|\psi_i\rangle\langle\psi_i| + (1-\lambda)\rho,
\\[1 ex]
\rho_2(\lambda) = \lambda( I/9) + (1-\lambda)\rho.
\end{array}
\end{equation} 
where $|\psi_i\rangle$ can be any state of the UPB given in 
Eq.~(\ref{eq9}). Both classes of states given above satisfy the range criterion 
\cite{Bandyopadhyay05, Bandyopadhyay08}. Clearly, $\rho_1(\lambda)$ is of 
rank-5 while $\rho_2(\lambda)$ is of full rank. Now, we compute the minimum 
eigenvalues of the operators  $(I\otimes\Lambda)(I\otimes u)\rho_1(\lambda)
(I\otimes u)^\dagger$ and $(I\otimes\Lambda)(I\otimes u)\rho_2 (\lambda)
(I\otimes u)^\dagger$. It is found that for a small range of $\lambda$ (compared 
to the original values as given in \cite{Bandyopadhyay05, Bandyopadhyay08}), it 
is possible to prove the inseparability of the states $\rho_i(\lambda);~i=1,2$ by 
applying $\Lambda_{u}$ (see Figure~\ref{fig1} and Figure~\ref{fig2}). The state 
(within the $3\otimes 3$ UPB) which is considered for $\rho_1(\lambda)$, is 
$|\psi_1\rangle$. 

\begin{figure}[h]
\centering
\includegraphics[scale=0.45]{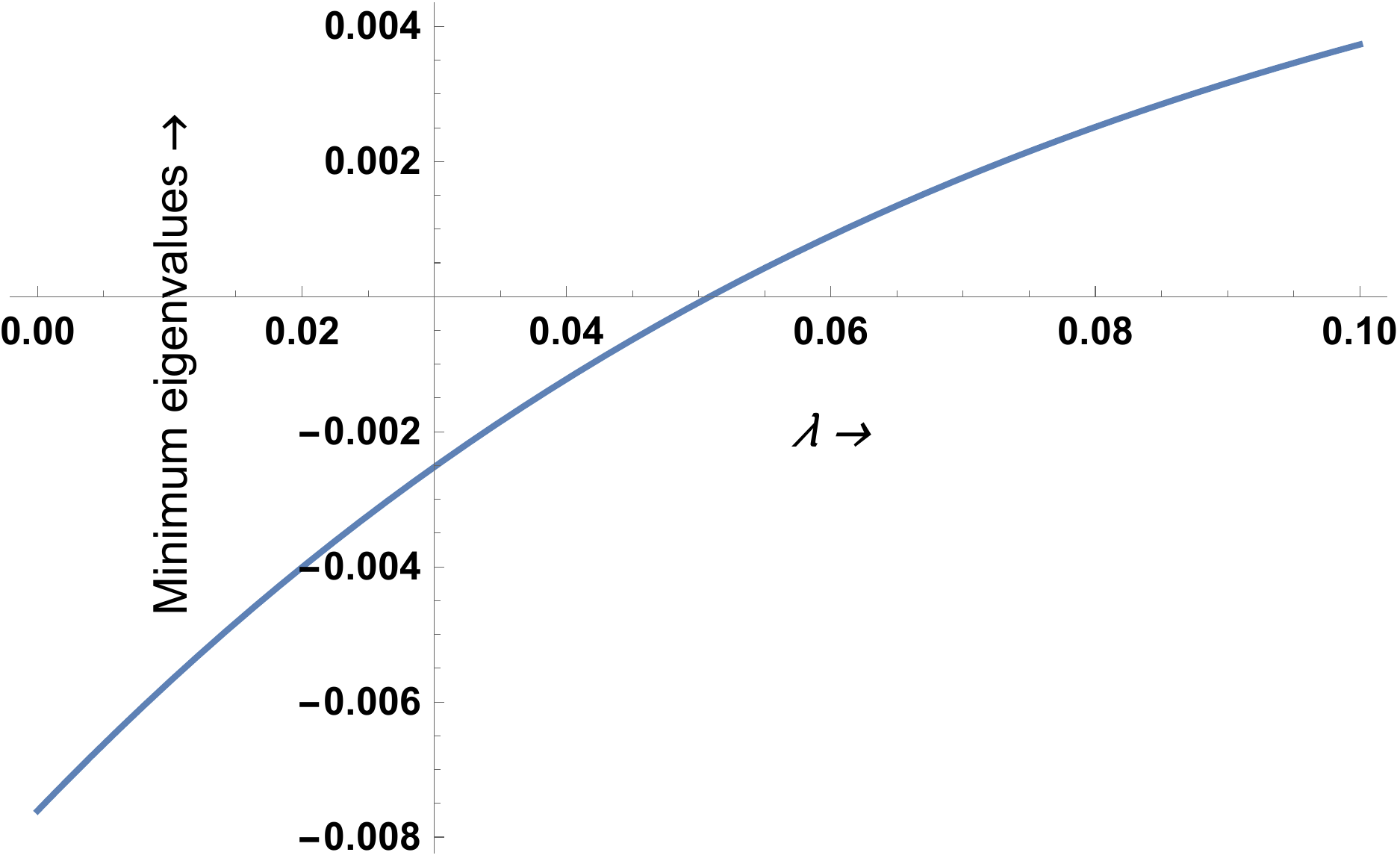}
\caption{Minimum eigenvalues of the operators $(I\otimes\Lambda)(I\otimes u)
\rho_1(\lambda)(I\otimes u)^\dagger$ are plotted for different values of $\lambda$. 
In the above figure it is clearly shown that for certain nonzero values of $\lambda$, 
it is possible to get negative eigenvalues of the operators $(I\otimes\Lambda)
(I\otimes u)\rho_1(\lambda)(I\otimes u)^\dagger$, resulting the detection of a 
subset of states $\rho_1(\lambda)$.}\label{fig1}
\end{figure}

\begin{figure}[h]
\centering
\includegraphics[scale=0.45]{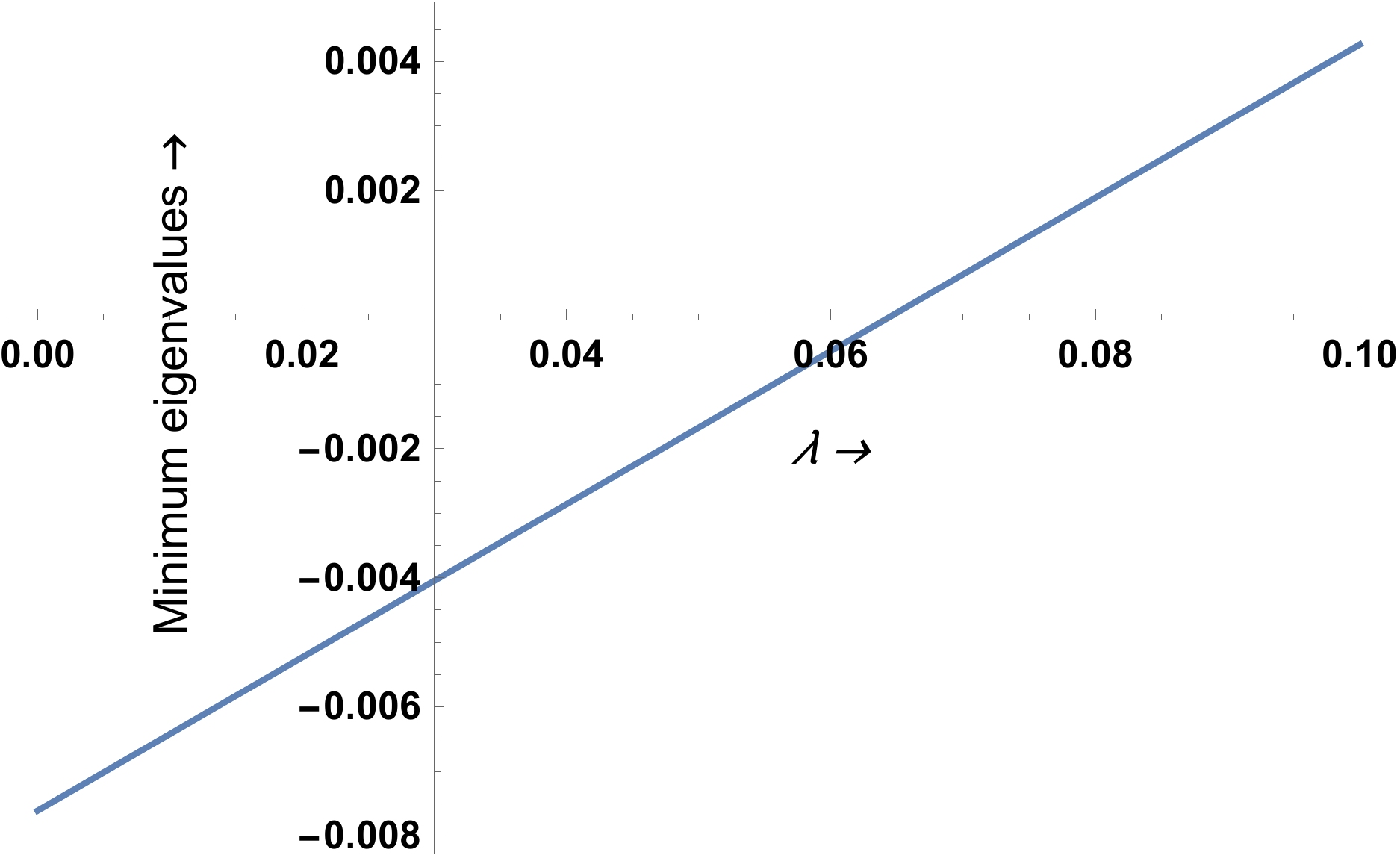}
\caption{Minimum eigenvalues of the operators $(I\otimes\Lambda)(I\otimes u)
\rho_2(\lambda)(I\otimes u)^\dagger$ are plotted for different values of $\lambda$. 
In the above figure it is clearly shown that for certain nonzero values of $\lambda$, 
it is possible to get negative eigenvalues of the operators $(I\otimes\Lambda)
(I\otimes u)\rho_2(\lambda)(I\otimes u)^\dagger$, resulting the detection of a 
subset of states $\rho_2(\lambda)$.}\label{fig2}
\end{figure}

Starting from the class of states $\rho_1(\lambda)$ in $3\otimes3$, a 
systematic method was developed in Ref.~\cite{Bandyopadhyay05} to 
produce such states in $d\otimes d$, having rank $r$ where $(d^2-4) 
\leq r \leq d^2$ . This was done by considering UPBs with real 
coefficients, that is, the states within a UPB are the normalized vectors 
of a real vector space. Here we say these UPBs as real UPBs (also see 
Ref.~\cite{Bandyopadhyay05}). To prove the inseparability of those states, 
the witness operator given in Eq.~(\ref{eq3}) was employed. However, 
in $d_1\otimes d_2$, we construct low-rank bound entangled states which 
satisfy the range criterion. 

In this regard, it is essential to mention the following: Given a set $S$ 
of four or lesser number of bipartite pure orthogonal product states in 
$d_1\otimes d_2$ then, the set S is extendible to a full basis in $d_1
\otimes d_2$ \cite{DiVincenzo03}. This particular fact guarantees that 
the bound entangled states of Ref.~\cite{Bandyopadhyay05} having 
rank $r$; $d^2-4\leq r \leq d^2$ in $d\otimes d$ must satisfy the range 
criterion. Therefore, it is highly important to construct bound entangled 
states that satisfy the range criterion and also having rank $r$, where 
$r<d_1d_2-4$ in $d_1\otimes d_2$; $d_1,d_1\geq3$. We are now ready 
to give a systematic protocol to do so: (i) Consider the class of UPBs for 
which if the {\it stopper} is removed from the UPB then the rest is 
extendible to a full basis, e.g., UPBs which are given in 
Refs.~\cite{Bennett99, DiVincenzo03, Halder18}. (ii) Following these 
constructions, it is possible to construct real UPBs of the above kind. 
(iii) The entanglement of an edge state corresponding to a UPB has 
robustness, i.e., if a product state is picked from a given UPB and is 
mixed with corresponding edge state then the resulting PPT state can 
be entangled. (iv) So, if the stopper is chosen from a real UPB of the 
above kind and mixed with the edge state with certain proportion to 
produce new bound entangled states then they must satisfy the range 
criterion. 

\paragraph*{Example}
We consider a real UPB in $4\otimes3$. For the construction, one can 
go through GenTiles2 UPBs of Ref.~\cite{DiVincenzo03}. The UPB and 
corresponding PPT entangled state (the edge state) are given as the 
following:

\begin{equation}\label{eq13}
\begin{array}{c}
|\phi_1\rangle = \frac{1}{\sqrt{2}}|1\rangle|1-2\rangle,~~
|\phi_2\rangle = \frac{1}{\sqrt{2}}|2\rangle|2-3\rangle,~~
|\phi_3\rangle = \frac{1}{\sqrt{2}}|3\rangle|3-1\rangle,\\
|\phi_4\rangle = \frac{1}{\sqrt{2}}|2-4\rangle|1\rangle,~~
|\phi_5\rangle = \frac{1}{\sqrt{2}}|3-4\rangle|2\rangle,~~
|\phi_6\rangle = \frac{1}{\sqrt{2}}|1-4\rangle|3\rangle,\\
|\phi_7\rangle = \frac{1}{2\sqrt{3}}|1+2+3+4\rangle|1+2+3\rangle,~~
\sigma = \frac{1}{5}\left(I-\sum_{i=1}^7|\phi_i\rangle\langle\phi_i|\right).
\end{array}
\end{equation}
The above edge state is invariant under partial transpose as it is 
produced due to a real UPB. Again, this state is of rank-5. We now 
consider the following class of states, given by
\begin{equation}\label{eq14}
\sigma_1(\lambda) = \lambda |\phi_7\rangle\langle\phi_7| + 
(1-\lambda)\sigma.
\end{equation}
The above states are of rank-6 $<(d_1d_2-4)$, again, the ranges of these 
states are the same and is spanned by the following product states:
\begin{equation}\label{eq15}
\begin{array}{c}
|\phi_1^\prime\rangle = \frac{1}{\sqrt{2}}|1\rangle|1+2\rangle,~~
|\phi_2^\prime\rangle = \frac{1}{\sqrt{2}}|2\rangle|2+3\rangle,~~
|\phi_3^\prime\rangle = \frac{1}{\sqrt{2}}|3\rangle|3+1\rangle,\\
|\phi_4^\prime\rangle = \frac{1}{\sqrt{2}}|2+4\rangle|1\rangle,~~
|\phi_5^\prime\rangle = \frac{1}{\sqrt{2}}|3+4\rangle|2\rangle,~~
|\phi_6^\prime\rangle = \frac{1}{\sqrt{2}}|1+4\rangle|3\rangle.
\end{array}
\end{equation}
We prove the inseparability of a subset of the states $\sigma_1(\lambda)$ 
by applying the same technique as employed to prove the inseparability of 
a subset of states $\rho_i(\lambda)$; $i=1,2$. We basically compute the 
negative eigenvalues for a range of $\lambda$ (see Figure~\ref{fig3}).

\begin{figure}[h]
\centering
\includegraphics[scale=0.45]{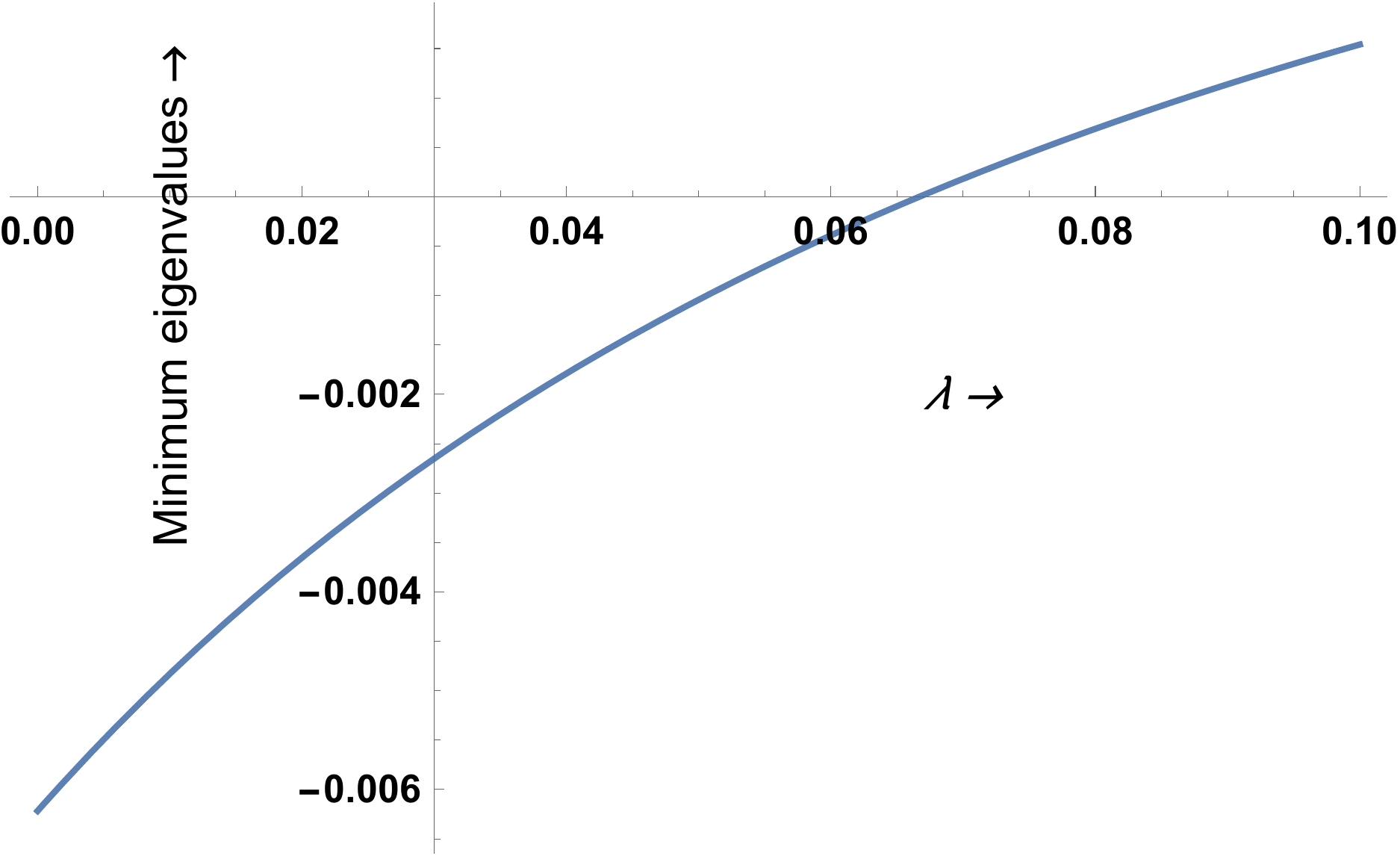}
\caption{Minimum eigenvalues of the operators $(I\otimes\Lambda)(I\otimes u)
\sigma_1(\lambda)(I\otimes u)^\dagger$ are plotted for different values of $\lambda$. 
In the above figure it is clearly shown that for certain nonzero values of $\lambda$, 
it is possible to get negative eigenvalues of the operators $(I\otimes\Lambda)
(I\otimes u)\sigma_1(\lambda)(I\otimes u)^\dagger$, resulting the detection of a 
subset of states $\sigma_1(\lambda)$.}\label{fig3}
\end{figure}

In general, to prove the existence of such class of states in $d_1\otimes d_2$ 
(which can be constructed from real UPBs), suitable witness operators can be 
employed. Now, we present the following theorem for any real UPB with the 
property that if the stopper is removed from the UPB then the rest is extendible 
to a full basis. These UPBs can be of any arbitrary cardinality (number of states 
present within a UPB).

\begin{theorem}\label{th1}
Consider any UPB (of the above kind) with cardinality $N$ in $d_1\otimes d_2$. 
Starting from such a UPB, it is possible to construct bound entangled states of 
ranks $(d_1d_2-N)+1$ to $d_1d_2$ in $d_1\otimes d_2$ with the property that 
they satisfy the range criterion.
\end{theorem}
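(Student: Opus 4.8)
The plan is to exhibit, for each target rank, a single explicitly described state built from the given UPB, and then to check the three requirements — invariance under partial transpose, entanglement, and the range criterion — one at a time, leaning on the tools assembled in Section~\ref{sec2}.

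First I would fix notation. Write the UPB as $\{|\phi_i\rangle\}_{i=1}^{N}$ with $|\phi_N\rangle$ the stopper, put $D=d_1d_2$ and $\mathcal{H}_S=\mathrm{span}\{|\phi_i\rangle\}_{i=1}^{N}$, and let $\sigma$ be the edge state of Eq.~(\ref{eq1}), so that $\sigma$ has rank $D-N$, is invariant under partial transpose (the UPB is real), and is detected by the witness $\mathcal{W}=\Pi-\gamma I$ of Eqs.~(\ref{eq3})--(\ref{eq4}) with $\mathrm{Tr}(\mathcal{W}\sigma)=-\gamma$. Here $\gamma\in(0,1)$: $\gamma>0$ because a UPB leaves no product vector in $\mathcal{H}_S^{\perp}$, and $\gamma<1$ because $\mathcal{H}_S$ is a proper subspace, so some product vector has a component outside it. For $j=1,\dots,N$ set $I_j=\{N\}\cup\{1,\dots,j-1\}$ (hence $|I_j|=j$) and define
\begin{equation*}
\sigma_j(\lambda)\;=\;(1-\lambda)\,\sigma\;+\;\frac{\lambda}{j}\sum_{i\in I_j}|\phi_i\rangle\langle\phi_i|,\qquad 0<\lambda<1 .
\end{equation*}
By construction $\sigma_j(\lambda)$ is a convex mixture of the edge state $\sigma$ and the separable state $\tfrac1j\sum_{i\in I_j}|\phi_i\rangle\langle\phi_i|$, so it is of the noisy form studied in this paper; it is partial-transpose invariant because $\sigma$ is and each $|\phi_i\rangle$ is a real product vector; and, since the orthonormal vectors $|\phi_i\rangle$, $i\in I_j$, lie in $\mathcal{H}_S$, which is orthogonal to $\mathrm{range}(\sigma)=\mathcal{H}_S^{\perp}$, the support of $\sigma_j(\lambda)$ is $\mathcal{H}_S^{\perp}\oplus\mathrm{span}\{|\phi_i\rangle:i\in I_j\}$, of dimension $(D-N)+j$. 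Letting $j$ run over $1,\dots,N$ thus realizes every rank from $(D-N)+1$ to $D$. (The uniform weights $\lambda/j$ are chosen only for concreteness; any positive weights summing to $\lambda$ work equally well below.)

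For entanglement I would use the witness directly. Since $\Pi\sigma=0$ and $\Pi|\phi_i\rangle=|\phi_i\rangle$ for $i\in I_j$, a one-line computation gives
\begin{equation*}
\mathrm{Tr}\big(\mathcal{W}\,\sigma_j(\lambda)\big)=(1-\lambda)(-\gamma)+\frac{\lambda}{j}\sum_{i\in I_j}(1-\gamma)=\lambda-\gamma ,
\end{equation*}
which is negative for every $\lambda\in(0,\gamma)$; since $\mathcal{W}$ is a genuine entanglement witness by the choice of $\gamma$ in Eq.~(\ref{eq4}), each $\sigma_j(\lambda)$ with $\lambda\in(0,\gamma)$ is entangled, and being PPT it is bound entangled. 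For the range criterion, recall from Section~\ref{sec2} that for a partial-transpose-invariant state it suffices to display a family of real product vectors spanning its range. The hypothesis that removing the stopper leaves an extendible set says precisely that the $(D-N+1)$-dimensional space $(\mathrm{span}\{|\phi_1\rangle,\dots,|\phi_{N-1}\rangle\})^{\perp}=\mathcal{H}_S^{\perp}\oplus\mathrm{span}\{|\phi_N\rangle\}$ carries a complete orthogonal product basis $\{|\chi_k\rangle\}_{k=1}^{D-N+1}$, which for a real UPB of this kind (as in the GenTiles2-type construction used in the Example) may be chosen with real coefficients. Then $\{|\chi_k\rangle\}_{k=1}^{D-N+1}\cup\{|\phi_i\rangle\}_{i=1}^{j-1}$ is a set of real product vectors spanning $\mathrm{range}(\sigma_j(\lambda))$, and by partial-transpose invariance its complex conjugate — the same set — spans $\mathrm{range}(\sigma_j(\lambda)^{t})$; hence the range criterion holds.

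The step I expect to be the real obstacle, and the one that genuinely consumes the structural hypothesis, is this last one: without knowing that the non-stopper part of the UPB extends to a full (real) orthogonal product basis there is no reason for $\mathrm{range}(\sigma_j(\lambda))$ to be spanned by product vectors at all — already for $j=1$, where it equals $(\mathrm{span}\{|\phi_1\rangle,\dots,|\phi_{N-1}\rangle\})^{\perp}$. By contrast the remaining ingredients are routine once $\sigma_j(\lambda)$ is written down: the rank is a dimension count, PPT is immediate from realness, and entanglement reduces to the inequality $\lambda<\gamma$ together with $\gamma\in(0,1)$, which guarantees a non-empty admissible interval for the noise parameter.
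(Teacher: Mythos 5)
Your proposal is correct and follows essentially the same route as the paper: you form the same convex mixture of the edge state with the normalized projector onto a stopper-containing subset of the UPB, detect entanglement via the same witness computation $\mathrm{Tr}(\mathcal{W}\sigma_j(\lambda))=\lambda-\gamma<0$ for $0<\lambda<\gamma$, and obtain the full range of ranks by varying the subset's size. Your treatment of the range criterion is in fact somewhat more explicit than the paper's (which simply asserts that the range is spanned by real orthogonal product states because the subset contains the stopper), but the underlying argument is identical.
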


\begin{proof}
Consider a set $S$ of pure orthogonal product states $\{|\psi_i\rangle
\}_{i=1}^N$. The set $S$ forms a real UPB in $d_1\otimes d_2$ with an 
additional property that if the stopper is removed from the set $S$ then 
the rest is extendible to a full basis. This property is important to produce 
the desired entangled states. The bound entangled state (the edge state) 
due to this UPB is given by
\begin{equation}\label{eq16}
\sigma_2 = \frac{1}{d_1d_2-N}\left(I-\sum_{i=1}^N|\psi_i\rangle\langle
\psi_i|\right).
\end{equation}
Now, consider a subset $S^\prime\subseteq S$ such that $S^\prime$ must 
include the stopper. Assume that $\mathbb{P}$ be the normalized projector 
onto the subspace spanned by the product states of $S^\prime$. Let us now 
consider the following class of states:
\begin{equation}\label{eq17}
\sigma_2(\lambda) = \lambda\mathbb{P} + (1-\lambda)\sigma_2.
\end{equation}
Notice that all of the above states must satisfy the range criterion. This is 
because of the following facts: (a) The above states are invariant under 
partial transpose. (b) The ranges of the above states are spanned by a set 
of orthogonal product states which are normalized vectors of a real vector 
space (this happens as $S^\prime$ contains the stopper). Now, to prove 
the inseparability of a subset of states from the above density matrices, we 
consider the same technique as given in Ref.~\cite{Bandyopadhyay05}. We 
consider the entanglement witness operator $\mathcal{W}$ as defined in 
Sec.~\ref{sec2}. Considering the trace $\mbox{Tr}[\mathcal{W}
\sigma_2(\lambda)]$, we obtain 
\begin{equation}\label{eq18}
\mbox{Tr}[\mathcal{W}\sigma_2(\lambda)] = (\lambda -\gamma).
\end{equation}
This quantity is less than zero when $0< \lambda < \gamma$, resulting 
the detection of PPT entangled states. Notice that if the subset $S^\prime$ 
contains only the stopper then the states $\sigma_2(\lambda)$ have the 
rank $(d_1d_2-N)+1$. Starting from this if the subset $S^\prime$ becomes 
exactly the same as $S$ then the states $\sigma_2(\lambda)$ have the rank 
$d_1d_2$ (full rank). Here the proof completes.
\end{proof}

A fundamental property of the class of states discussed in Theorem \ref{th1} 
is that the range of such a state is spanned by orthogonal product states. 
Therefore, it is quite natural to ask about the construction of bound entangled 
states which satisfy the range criterion but the range of such a state is not 
spanned by orthogonal product states, that is, the range of such a state is 
spanned by nonorthogonal product states. To answer this question, we 
consider a set $S$ of pure orthogonal product states in a real Hilbert space 
$\mathcal{H}$. Assume that the states of $S$ form a UCPB and they span 
the subspace $\mathcal{H}_S$ of $\mathcal{H}$. If the normalized projector 
onto the complementary subspace $\mathcal{H}_S^\perp$ is separable 
then that state must be written as the convex combination of nonorthogonal 
product states. Explicit construction of such a set of product states is given in 
\cite{DiVincenzo03}. Now, we present the following theorem.

\begin{theorem}\label{th2}
The UCPBs of the above kind are useful to construct bound entangled states 
which satisfy the range criterion but the range of which is not spanned by 
orthogonal product states. 
\end{theorem}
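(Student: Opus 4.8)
The plan is to run the construction behind Theorem~\ref{th1} ``in reverse'', replacing the real UPB used there by a real UCPB of the stated kind. First I would fix $\mathcal{H}=\mathcal{H}_A\otimes\mathcal{H}_B$ with $\dim\mathcal{H}=D$ and a real UCPB $S'=\{|\phi_i\rangle\}_{i=1}^{n'}$ spanning $\mathcal{H}_{S'}$ such that the normalized projector onto $\mathcal{H}_{S'}^\perp$ is separable. By the result of \cite{DiVincenzo03} cited above, such UCPBs exist and that projector can be written as a convex combination $\sum_k p_k\,|a_k\rangle\langle a_k|\otimes|b_k\rangle\langle b_k|$ of (necessarily nonorthogonal) product states, which in the present setting can be taken to be real (as furnished by the explicit construction of \cite{DiVincenzo03}), so that $\mathcal{H}_{S'}^\perp=\mathrm{span}\{|a_k\rangle\otimes|b_k\rangle\}_k$; in particular $\mathcal{H}_{S'}^\perp$ contains at least one product vector. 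Then I would choose a maximal family $\{|\phi_i\rangle\}_{i=n'+1}^{n}$ of mutually orthogonal product vectors inside $\mathcal{H}_{S'}^\perp$; the UCPB property forces $1\le n-n'<D-n'$, hence $n<D$, and maximality forces the enlarged family $\{|\phi_i\rangle\}_{i=1}^{n}$ to be a (real) UPB, since any product vector orthogonal to all of $|\phi_1\rangle,\dots,|\phi_n\rangle$ would lie in $\mathcal{H}_{S'}^\perp$ and could be appended to the maximal family. With
\begin{gather*}
\sigma_{\mathrm{edge}}=\frac{1}{D-n}\Big(I-\sum_{i=1}^{n}|\phi_i\rangle\langle\phi_i|\Big),\\
\rho_{\mathrm{noise}}=\frac{1}{n-n'}\sum_{i=n'+1}^{n}|\phi_i\rangle\langle\phi_i|,
\end{gather*}
the first a partial-transpose invariant edge state (real UPB, as in Eq.~(\ref{eq1})) and the second a separable state, I would study the family $\rho'(\lambda)=\lambda\,\rho_{\mathrm{noise}}+(1-\lambda)\,\sigma_{\mathrm{edge}}$ for $0<\lambda<1$, which is by construction a convex combination of an edge state and a separable state.

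The verification then has two parts. For $0<\lambda<1$ the state $\rho'(\lambda)$ is partial-transpose invariant and $\mathrm{range}\,\rho'(\lambda)=\mathrm{span}\{|\phi_i\rangle\}_{i=n'+1}^{n}\oplus\big(\mathrm{span}\{|\phi_i\rangle\}_{i=1}^{n}\big)^\perp=\mathcal{H}_{S'}^\perp$. Since $\mathcal{H}_{S'}^\perp$ is spanned by the real product states $\{|a_k\rangle\otimes|b_k\rangle\}_k$, the sufficient condition for the range criterion recalled in Section~\ref{sec2} shows that $\rho'(\lambda)$ satisfies it; on the other hand $\mathcal{H}_{S'}^\perp$ contains strictly fewer than $D-n'$ mutually orthogonal product vectors (the UCPB property), so no collection of orthogonal product states can span $\mathrm{range}\,\rho'(\lambda)$ --- this is exactly the contrast with the states of Theorem~\ref{th1}. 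To see that $\rho'(\lambda)$ is genuinely entangled for a nonempty range of $\lambda$, I would reuse the witness $\mathcal{W}=\Pi-\gamma I$ of Eq.~(\ref{eq3}), with $\Pi$ the projector onto $\mathrm{span}\{|\phi_i\rangle\}_{i=1}^{n}$ and $\gamma$ as in Eq.~(\ref{eq4}); since $\{|\phi_i\rangle\}_{i=1}^{n}$ is a UPB, $\gamma>0$. From $\mathrm{range}\,\rho_{\mathrm{noise}}\subseteq\mathrm{range}\,\Pi$ and $\mathrm{range}\,\sigma_{\mathrm{edge}}\perp\mathrm{range}\,\Pi$ one obtains $\mathrm{Tr}(\mathcal{W}\rho_{\mathrm{noise}})=1-\gamma$ and $\mathrm{Tr}(\mathcal{W}\sigma_{\mathrm{edge}})=-\gamma$, hence $\mathrm{Tr}(\mathcal{W}\rho'(\lambda))=\lambda-\gamma<0$ whenever $0<\lambda<\gamma$; being PPT, each such $\rho'(\lambda)$ is then bound entangled, and noisy by construction. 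As a consistency check, at $\lambda=\frac{n-n'}{D-n'}$ the state $\rho'(\lambda)$ coincides with the separable normalized projector onto $\mathcal{H}_{S'}^\perp$ of Eq.~(\ref{eq2}), which is compatible with $\gamma\le\frac{n-n'}{D-n'}<1$.

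The step I expect to require the most care is the passage from a ``UCPB of the above kind'' to a bona fide UPB: one must argue carefully that completing $S'$ by a maximal mutually orthogonal product family inside $\mathcal{H}_{S'}^\perp$ really produces an unextendible product basis (so that $\sigma_{\mathrm{edge}}$ is a genuine edge state and the witness of Eq.~(\ref{eq3}) applies with $\gamma>0$), and that the real-vector-space hypothesis is inherited both by this completion and by the product decomposition of the projector onto $\mathcal{H}_{S'}^\perp$ furnished by \cite{DiVincenzo03}, since the sufficient condition used for the range criterion is stated only for real product spanning sets. To make all of this concrete I would accompany the general construction with an explicit low-dimensional example built from a GenTiles2-type UCPB of \cite{DiVincenzo03}, displaying the nonorthogonal real product states that span $\mathcal{H}_{S'}^\perp$ and the interval of $\lambda$ for which $\rho'(\lambda)$ is bound entangled.
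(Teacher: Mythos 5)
Your proposal is correct and follows essentially the same route as the paper: your $\sigma_{\mathrm{edge}}$ and $\rho_{\mathrm{noise}}$ are exactly the paper's $\sigma_3'$ and $\delta_1$ from the decomposition in Eq.~(\ref{eq2}), your $\rho'(\lambda)$ is the paper's $\sigma_3(\lambda)$, and the range argument and the witness computation $\mathrm{Tr}(\mathcal{W}\rho'(\lambda))=\lambda-\gamma$ coincide with the paper's. The only addition is your explicit maximality argument that the completed family is a UPB, which the paper simply asserts.
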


\begin{proof}
Let $S_1$ = $\{|\psi_i\rangle\}_{i=1}^{n^\prime}$ be such a UCPB in a 
Hilbert space $\mathcal{H}$. If these states span the subspace $\mathcal{H}
_{S_1}$ of $\mathcal{H}$ then the complementary subspace $\mathcal{H}
_{S_1}^\perp$ contains the states $\{|\psi_i\rangle\}_{i=n^\prime+1}^{n}$, 
where $n$ is strictly less then the net dimension of $\mathcal{H}$. The 
normalized projector $\sigma_3$ onto the subspace $\mathcal{H}_{S_1}^
\perp$ is separable. Therefore, the state $\sigma_3$ satisfies the range 
criterion. Notices that the states of $S_1$ and the states $\{|\psi_i\rangle\}_
{i=n^\prime+1}^{n}$ together form a UPB. We assume that the edge 
state corresponding to that UPB is $\sigma_3^\prime$. So, $\sigma_3$ can 
be written as the convex combination of $\sigma_3^\prime$ and some 
separable state $\delta_1$, for clarity see Eq.~(\ref{eq2}). We now define a 
class of partial transpose invariant states $\sigma_3(\lambda)$ as the following:
\begin{equation}\label{eq19}
\sigma_3(\lambda) = \lambda \delta_1 + (1-\lambda)\sigma_3^\prime.
\end{equation}
Clearly, for nonzero $\lambda$, all of them have the same range. So, the 
states $\sigma_3(\lambda)$ satisfy the range criterion. Next, to prove the 
inseparability of a subset of such states, one can follow the technique given 
in the proof of Theorem \ref{th1}. So, using the witness operator, given 
in Eq.~(\ref{eq3}), it is possible to have inseparable PPT states when 
$0<\lambda<\gamma$. So, these result in bound entangled states which 
satisfy the range criterion but the range of such a state is not spanned by 
orthogonal product states.
\end{proof}

The states constructed so far, are bipartite noisy bound entangled states 
which satisfy the range criterion. However, there also exist bipartite noisy 
bound entangled states which do not satisfy the range criterion. Moreover, 
it is possible to construct a class of bound entangled states which show 
maximal robustness of entanglement of an edge state. To realize this fact, 
consider a UCPB in a Hilbert space $\mathcal{H}$. This UCPB should be 
different compared to the above one in a sense that if the states of the 
UCPB span the subspace $\mathcal{H}_S$ of $\mathcal{H}$ then the 
complementary subspace $\mathcal{H}_S^\perp$ has product state deficit. 
Thus, the normalized projector onto the complementary subspace must be 
entangled and violet the range criterion. In fact consider the following 
class of states (having exactly the same range as that of the normalized 
projector onto the complementary subspace $\mathcal{H}_S^\perp$)

\begin{equation}\label{eq20}
\sigma_4(\lambda) = \lambda\delta_2 + (1-\lambda)\sigma_4^\prime,
\end{equation}
where $\delta_2$ is a separable state and $\sigma_4^\prime$ is an edge 
state. Note that due to product state deficit for all nonzero values of 
$\lambda$, the states $\sigma_4(\lambda)$ violate the range criterion. So, 
all of these states are entangled and due to the construction they must be 
PPT. These states also suggest that with any proportion the separable 
states $\delta_2$ is mixed with the edge state $\sigma_4^\prime$, the 
resulting states are inseparable. In this sense, the above states are showing 
maximal robustness of entanglement of the edge state. Interestingly, the 
witness operator given in Sec.~\ref{sec2} is not able to detect all the states 
$\sigma_4(\lambda)$ and it happens when $\lambda>\gamma$.

\paragraph*{Example:}
It is possible to construct a simple example of such a class of bound 
entangled states. Consider the UPB in $3\otimes3$ and the state $\rho$ 
as given earlier in Eq.~(\ref{eq9}). This UPB can be extended trivially 
to a $4\otimes3$ UPB by adding some product states $\{|41\rangle,~
|42\rangle,~|43\rangle\}$. Here, the notation $|ab\rangle$ stands for 
$|a\rangle\otimes|b\rangle$. Now, consider the following class of PPT 
states:

\begin{equation}\label{eq21}
\rho_3(\lambda) = \lambda|ab\rangle\langle ab| + (1-\lambda)\rho,
\end{equation}
where $|ab\rangle$ is any product state picked from the set $\{|41\rangle,
~|42\rangle,~|43\rangle\}$. Notice that the state $|ab\rangle$ neither 
belong to the $3\otimes3$ subspace where the pure states of 
Eq.~(\ref{eq9}) reside nor belong to the support of $\rho$. So, the 
five-dimensional subspace where the states $\rho_3(\lambda)$ are supported, 
has product state deficit. This certifies the violation of the range criterion 
by the states $\rho_3(\lambda)$ and therefore, we identify a different class 
of bipartite bound entangled states which are noisy and violet the range 
criterion.

There are other ways to construct noisy bound entangled states. In this 
regard we consider a different type of UPBs -- the states of which cannot 
be perfectly distinguished by separable measurements. Such UPBs can be 
found in Refs.~\cite{Bandyopadhyay15, Yang15}. We now present the 
following proposition and as a useful byproduct of the following proposition 
we obtain a new method to generate noisy bound entangled states.  

\begin{proposition}\label{prop1}
Consider a UPB, the states of which cannot be perfectly distinguished 
by separable measurements. From such a UPB if any state is missing 
then the resulting set becomes a UCPB. 
\end{proposition}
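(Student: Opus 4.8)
The plan is to argue by contradiction: assuming that the set obtained by deleting one member of the UPB \emph{is} completable to a complete orthogonal product basis of the whole space, I would exhibit a separable measurement that perfectly discriminates the original UPB, which is impossible by hypothesis. The first thing to record is a reformulation of the target: for a set $S'$ of pure orthogonal product states spanning $\mathcal{H}_{S'}$, being a UCPB in the sense of Definition~\ref{def1} is equivalent to not being completable, i.e.\ to $\mathcal{H}_{S'}^{\perp}$ \emph{not} containing $\dim\mathcal{H}_{S'}^{\perp}$ mutually orthogonal product vectors --- since any such collection would span $\mathcal{H}_{S'}^{\perp}$ and, together with $S'$, constitute a complete orthogonal product basis of $\mathcal{H}$. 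So it suffices to rule out this last possibility.

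Concretely, I would write the UPB as $S=\{|\psi_i\rangle\}_{i=1}^{n}$ in $\mathcal{H}=\mathcal{H}_A\otimes\mathcal{H}_B$ with $D=\dim\mathcal{H}$ (here $n<D$, because an $n=D$ orthogonal product basis is trivially distinguished by the separable POVM built from its own rank-one projectors). Fix an index $j$, put $S'=S\setminus\{|\psi_j\rangle\}$, and note that $\mathcal{H}_{S'}^{\perp}$ has dimension $D-n+1$ and already contains the product state $|\psi_j\rangle$. Suppose, for contradiction, that $S'$ is completable; then there are mutually orthogonal product states $|\chi_k\rangle$, $k=1,\dots,D-n+1$, spanning $\mathcal{H}_{S'}^{\perp}$, so that $B=S'\cup\{|\chi_k\rangle\}_k$ is a complete orthonormal product basis of $\mathcal{H}$.

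The crux is to use $B$ to build the discriminating measurement: take the POVM $M$ whose elements are the rank-one projectors $\{|\psi_i\rangle\langle\psi_i|\}_{i\neq j}$ together with $\{|\chi_k\rangle\langle\chi_k|\}_k$. This is a legitimate measurement, since the rank-one projectors of a complete orthonormal basis sum to the identity, and it is a \emph{separable} measurement, since every element is a product operator. I would then check that $M$ identifies every state of $S$ with certainty: an input $|\psi_i\rangle$ with $i\neq j$ gives the outcome labelled $i$ deterministically (the projectors of $B$ are pairwise orthogonal), while an input $|\psi_j\rangle$, which lies in $\mathcal{H}_{S'}^{\perp}$ and hence in the span of the $|\chi_k\rangle$, gives, with certainty, an outcome labelled by one of the $|\chi_k\rangle$, and no such outcome can occur for any $|\psi_i\rangle$ with $i\neq j$ because $|\chi_k\rangle$ is orthogonal to $\mathcal{H}_{S'}$. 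This contradicts the hypothesis that $S$ cannot be perfectly distinguished by separable measurements; hence $S'$ is not completable, i.e.\ it is a UCPB, and since $j$ was arbitrary, deleting any single state from the UPB yields a UCPB. (Because $|\psi_j\rangle$ itself sits in $\mathcal{H}_{S'}^{\perp}$, the complement genuinely contains a product state, so $S'$ is properly a UCPB and not a UPB.)

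The only substantive ingredient is the elementary fact that a complete orthogonal product basis is perfectly distinguishable by the separable POVM formed from its own rank-one projectors; the only place requiring care is the equivalence in the first paragraph, so that what gets certified is a UCPB in the precise sense of Definition~\ref{def1} rather than a formally weaker statement. Everything else reduces to one-line orthogonality checks.
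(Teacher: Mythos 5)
Your argument is correct, and it turns on the same central device as the paper's proof: a hypothetical separable structure on the orthogonal complement of the deleted set would let you append outcomes to the rank-one product projectors $\{|\psi_i\rangle\langle\psi_i|\}_{i\neq j}$ and so perfectly discriminate the original UPB with a separable POVM, contradicting the hypothesis. The difference is in what exactly you contradict. You assume \emph{completability} of $S'$ and use the full orthogonal product basis to build the POVM, concluding directly that $S'$ is a UCPB (after correctly noting the equivalence between ``not completable'' and Definition~\ref{def1}). The paper instead contradicts the weaker assumption that the complementary operator $I-\sum_{i\neq j}|\psi_i\rangle\langle\psi_i|$ is \emph{separable} (separability of this single completing POVM element already suffices for perfect separable discrimination), concluding that this projector is entangled, and only then deduces the product-state deficit via the observation that a projector onto a subspace spanned by a full orthogonal product set would be separable. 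The paper's route therefore proves strictly more than the proposition states --- namely that the normalized complementary projectors are themselves (noisy) bound entangled states --- and this stronger byproduct is exactly what the text relies on in the paragraph immediately following the proposition. Your route is cleaner and entirely sufficient for the statement as written, but it does not by itself deliver that inseparability claim, since (as the paper notes in the introduction) the complementary projector of a UCPB may or may not be entangled.
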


\begin{proof}
If there are $N$ states which form a UPB of the above kind then take 
any $N-1$ states without loss of generality. It is possible to construct 
rank-1 separable operators ($\Pi_i$) corresponding to these states. To 
complete a separable measurement the operator $(I-\sum_{i=1}^{N-1}
\Pi_i)$ must be separable. But this contradicts the fact that the states of 
the UPB cannot be distinguished by any separable measurement. Thus, 
the operator $(I-\sum_{i=1}^{N-1}\Pi_i)$ must be inseparable and as 
it is a projector, the inseparability results the fact that there must not 
be sufficient orthogonal product states present in the range of the 
operator $(I-\sum_{i=1}^{N-1}\Pi_i)$. Because if there are sufficient 
pure orthogonal product states then the operator must be separable.
Clearly, in a given Hilbert space $\mathcal{H}$, if the subset of any $N-1$ 
states of the UPB span the subspace $\mathcal{H}_{S^\prime}$ of 
$\mathcal{H}$ then $\mathcal{H}_{S^\prime}^\perp$ contains fewer 
orthogonal product states then its dimension. So, the subset is a UCPB. 
\end{proof}

Notice that the operators $(I-\sum_{i=1}^{N-1}\Pi_i)$ in the normalized 
form are noisy bound entangled states as the subset of any $N-1$ states 
is a UCPB. However, it is not known whether these states satisfy the range 
criterion or not. From the discussion so far, it is clear that if a state is missing 
from the UPB then it may result in a UCPB which can lead to the generation 
of noisy bound entangled state. But this may depend on which state is 
missing. However, there are scenarios when it really does not matter which 
state is missing as the resulting subset is always a UCPB (Proposition 
\ref{prop1}). Along with this line an interesting observation can be given in 
the following way: Consider the UPB, given in Eq.~(\ref{eq13}). Suppose, 
from this UPB the first state $|\phi_1\rangle$ is missing. Then the the rest 
product states spans a six-dimensional subspace $\mathcal{H}_{S^\prime}$. 
Applying the map $\Lambda_u$ of Eq.~(\ref{eq10}), it can be shown that 
the normalized projector onto the subspace $\mathcal{H}_{S^\prime}^\perp$ 
is inseparable. Thus, if the state $|\phi_1\rangle$ is missing from the UPB of 
Eq.~(\ref{eq13}) then the rest product states result in a UCPB. On the other 
hand, if the state $|\phi_7\rangle$ is missing from the same UPB then the 
rest product states can be extended to a full basis. 

\section{Conclusion}\label{sec4}
In this work, we have mainly focused on the construction of the noisy 
bound entangled states. Such a state can be written as a convex 
combination of an edge state and a separable state. Undoubtedly, 
such states shed light on the robustness of entanglement of edge states. 
Moreover, in a practical scenario it is never possible to eliminate noise 
completely and hence studying noisy entangled states have practical 
relevance. We have constructed an explicit protocol to produce 
low-rank noisy bound entangled states which satisfy the range criterion. 
In particular, the ranges of these states are spanned by orthogonal 
product states. Furthermore, we have discussed about ways to construct 
noisy bound entangled states from UCPBs. We have shown that the 
range of a UCPB generated bound entangled state may not be spanned 
by orthogonal product state still it can satisfy the range criterion.
For further studies one may consider the present problem of constructing 
distinct classes of noisy bound entangled states but without invoking 
product bases. It will also be interesting to examine whether these 
states will satisfy range criterion or not.

\section*{Acknowledgment}
\noindent R. S. acknowledges funding from SERB MATRICS MTR/2017/000431.

\bibliographystyle{elsarticle-num-names}
\biboptions{sort&compress}
\bibliography{ref}
\end{document}